\newcommand{\bra}[1]{\langle #1 |}
\newcommand{\ket}[1]{| #1 \rangle}
\newcommand{\ketbra}[2]{\ket{#1}\bra{#2}}
\newcommand{\proj}[1]{\ket{#1}\bra{#1}}
\newcommand{\tr}{{\rm Tr}}
\newcommand{\one}{{\bf 1}}
\newcommand{\re}{{\rm Re}\,}
\newcommand{\prlsection}[1]{{\it{#1}} ---}
\newcommand{\chan}[1]{\mathcal #1}
\newcommand{\cchan}[1]{\widehat {\mathcal #1}}
\newcommand{\id}{{\rm id}}
\theoremstyle{plain}
\newtheorem{proposition}{Proposition}
\newtheorem{theorem}[proposition]{Theorem}
\newtheorem{corollary}[proposition]{Corollary}
\theoremstyle{definition}
\newtheorem{definition}{Definition}
\begin{document}

\title{General conditions for approximate quantum error correction and near-optimal recovery channels}
\author{C\'edric B\'eny}
\affiliation{Centre for Quantum Technologies, National University of Singapore, 3 Science Drive 2, Singapore 117543}
\author{Ognyan Oreshkov}
\affiliation{Grup de F\'{i}sica Te\`{o}rica, Universitat Aut\`{o}noma de Barcelona, 08193 Bellaterra (Barcelona), Spain}
\date{\today}

\begin{abstract}
We derive necessary and sufficient conditions for the approximate correctability of a quantum code, generalizing the Knill-Laflamme conditions for exact error correction. Our measure of success of the recovery operation is the worst-case entanglement fidelity of the overall process. We show that the optimal recovery fidelity can be predicted exactly from a dual optimization problem on the environment causing the noise. We use this result to obtain an easy-to-calculate estimate of the optimal recovery fidelity as well as a way of constructing a class of near-optimal recovery channels that work within twice the minimal error. In addition to standard subspace codes, our results hold for subsystem codes and hybrid quantum-classical codes.
\end{abstract}

\maketitle

\prlsection {Introduction} Given the extreme fragility of quantum coherence, quantum error-correction procedures are believed to be essential for the successful implementation of quantum communication or computation. Exact correctability is characterized in general terms by the Knill-Laflamme (KL) conditions~\cite{knill97} which specify the set of correctable errors for a particular code. However, practically useful codes need not be exactly correctable for any given noise model. In fact, a few exceptional examples show that allowing for a negligible error in the recovery can lead to surprisingly better codes~\cite{leung97,crepeau05}.
This indicates that assuming exact correctability is too strong a restriction. It is therefore of considerable interest to find appropriately weaker error-correction conditions.

In this letter, we generalize the KL conditions to the case of approximate correctability. We view the KL conditions as a statement about the information gathered by the environment causing the noise~\cite{beny09x1} (a fact previously noted in Ref.~\cite{ogawa05}). Thus our analysis makes essential use of the concept of {\em complementary} channel~\cite{devetak05}. Together with tools introduced in Ref.~\cite{kretschmann08}, this provides the basis for our main technical result (Theorem~\ref{main}), which we use to obtain easily computable estimates of the optimal recovery error. We also propose a class of near-optimal recovery channels, which offers a significant simplification to the problem of finding an optimal recovery operation~\cite{yamamoto05,reimpell05,fletcher08,kosut08}.

The analysis of approximate error correction depends on the figure of merit used to compare the states after correction to the input states. In this work we focus on the entanglement fidelity minimized over all input states (also known as worst-case entanglement fidelity). The entanglement fidelity \cite{schumacher96} has been shown to be the pertinent fidelity measure in both quantum communication and computation scenarios since it estimates not only how well the state of the system under correction is preserved but also how its entanglement with auxiliary systems is maintained. Minimization over all inputs is essential if one is interested in guaranteeing a given fidelity when the state to be corrected is not known, as in the case of quantum computing.
In contrast, most previous work has considered input-dependent fidelities~\cite{barnum00,schumacher01,tyson09x1,buscemi08}.
Sufficient conditions for approximate correctability under the worst-case entanglement fidelity were proposed in Ref.~\cite{doddamane09}.
Here we obtain both sufficient and necessary conditions which are a direct generalization of the KL conditions.
Moreover, we prove our result in a very general context; namely for the approximation of any channel, not necessarily the identity map on the code. One advantage of this generality is that our results apply directly to the more general schemes of subsystem, or operator quantum error correction~\cite{kribs05,kribs06,poulin05x1,beny07x1}.
The present results are also strictly stronger than those of Ref.~\cite{kretschmann08x1,beny09x1} which are based on the diamond-norm distance rather than the fidelity.

\prlsection{Background}
The problem of quantum error correction can be formulated as follows: we are given a channel $\chan N$ which can represent either a communication channel or the open dynamics of a physical system which we would like to use as a quantum memory. The goal is to find an encoding operation $\chan E$ and a decoding (or recovery) operation $\chan R$, such that the full operation $\chan R \chan N \chan E$ is equal to the identity map. One usually assumes that the encoding is of the form $\chan E(\rho) = V \rho V^\dagger$ where $V$ is an isometry embedding a small Hilbert space (the code) into the larger physical Hilbert space on which $\chan N$ acts.

Given $\chan N$ and $\chan E$, the KL conditions~\cite{knill97} provide a simple way of testing whether a recovery channel $\chan R$ exists. In addition, these conditions help reasoning about error correction. For instance, one can use them together with the no-cloning theorem to easily demonstrate that it is not possible to encode a qubit in $n$ qubits and faithfully decode it if $n/4$ or more arbitrary qubit errors occur.
The reason we mention this particular example is that it is known to fail dramatically if we allow for an arbitrarily small reconstruction error (provided $n$ is large enough).
Indeed, it was shown in Ref.~\cite{crepeau05} that one can encode quantum information in $n$ qubits undergoing almost $n/2$ arbitrary errors and correct it with vanishing error as $n\rightarrow\infty$.

Here we study what becomes of the KL conditions when we allow for imperfect reconstruction of the code. Additionally, partly because it reveals an important symmetry of the problem,  we also generalize quantum error correction in a different direction.
We seek a ``recovery'' operation $\chan R$ such that $\chan R \chan N \chan E$ is close not necessarily to the identity on the code, but to a fixed arbitrary channel $\chan M$. In particular, this means that our theory applies to subsystem codes~\cite{kribs05,kribs06}, and more generally algebraic codes~\cite{beny07x1} (representing hybrid quantum-classical information), when $\chan M$ projects on an algebra~\cite{beny09x1}.
Note that since we will never separate $\chan N$ from the encoding $\chan E$, we will simply work with a channel ``$\chan N$'' which one can think of as $\chan N \chan E$. It typically maps states on a small (logical) Hilbert space to states on a larger (physical) one.

We will make essential use of the fact that a general quantum operation, or channel $\chan N$, can always be viewed as resulting from a unitary interaction $U$ with an ``environment'' $E$ whose initial state $\ket{\psi}$ is known and which is later discarded (traced out). It does not matter which state $\ket{\psi}$ we use since the difference can be absorbed in the unitary. What matters is the isometry $V$ defined by $V\ket{\phi} := U(\ket{\phi} \otimes \ket{\psi})$ so that
\(
\chan N(\rho) = \tr_E(V \rho V^\dagger).
\)
This isometry $V$ is not unique, but unique up to a further local unitary map on the environment, eventually followed by an embedding into a larger environment. From the isometry $V$, one obtains the channel elements $E_i$ of $\chan N(\rho)=\sum_iE_i\rho E_i^{\dagger}$ simply by writing the partial trace explicitly in terms of a basis $\ket{i}$ of $E$.
If instead of tracing out the environment after the unitary interaction, we trace out the target system $B$, we obtain a channel $\cchan N$ which is said to be {\em complementary} to $\chan N$:
\(
\cchan N(\rho) = \tr_B(V \rho V^\dagger)
\).
It is easy to see that
\begin{equation}
\label{nhatdevelop}
\cchan N(\rho) = \sum\nolimits_{ij} \tr(E_i \rho E_j^\dagger) \ketbra{i}{j}.
\end{equation}
All complementary channels correspond to some choice of the orthonormal family of states $\ket{i}$ in the environment.

\prlsection{Main result}
Let $f(\rho,\sigma) = \tr\sqrt{\sqrt{\rho} {\sigma} \sqrt{\rho}}$ be the fidelity \cite{uhlmann76} between states $\rho$ and $\sigma$. For a given state $\rho$, we introduce the ``entanglement fidelity'' between channels $\chan N$ and $\chan M$,
\begin{equation*}
F_\rho(\chan N, \chan M) = f((\chan N \otimes \id)(\proj{\psi}), (\chan M \otimes \id)(\proj{\psi})),
\end{equation*}
where $\ket{\psi}$ is a purification of $\rho$. When $\chan M = \id$, this quantity reduces to Schumacher's entanglement fidelity of $\chan N$ \cite{schumacher96}. We will compare channels using the worst-case entanglement fidelity
\begin{equation}
F(\chan N, \chan M) = \min_\rho F_\rho(\chan N, \chan M),
\end{equation}
which was studied in Ref.~\cite{gilchrist05}. We remark that $F(\chan N, \chan M)$ relates to $f(\sigma,\rho)$ in the same way that the diamond-norm distance \cite{Kitaev97} relates to the trace distance. Its operational meaning can be deduced from that of $f(\sigma,\rho)$ \cite{Dodd01, Fuchs96}.

\begin{theorem}
\label{main} If $\cchan N$ and $\cchan M$ are channels complementary to $\chan N$ and $\chan M$, respectively, then
\begin{equation}
\label{fidequ} \max_{\chan R} F(\chan R \chan N, \chan M) = \max_{\chan R'} F(\cchan N, \chan R' \cchan M),
\end{equation}
where the maxima are over all quantum channels with the appropriate source and target spaces.
\end{theorem}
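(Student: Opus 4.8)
The plan is to dilate every channel isometrically and use Uhlmann's theorem to turn the entanglement fidelity into a functional that is \emph{linear} in the state and in the Uhlmann freedom, so that the whole problem becomes a bilinear max--min amenable to a minimax theorem. Fix Stinespring isometries $V\colon\Hil_A\to\Hil_B\otimes\Hil_E$ for $\chan N$ and $W\colon\Hil_A\to\Hil_C\otimes\Hil_F$ for $\chan M$, so that $\chan N=\tr_E\,V(\cdot)V^\dagger$, $\cchan N=\tr_B\,V(\cdot)V^\dagger$, and similarly $\chan M,\cchan M$ arise from $W$. A recovery $\chan R\colon B\to C$ is dilated to an isometry $V_{\chan R}\colon\Hil_B\to\Hil_C\otimes\Hil_G$, and $(V_{\chan R}\otimes\one_E)V$ then dilates $\chan R\chan N$ with environment $\Hil_G\otimes\Hil_E$. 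The first step is the observation that for any two channels $\chan P,\chan Q$ with common output and dilating isometries $S,T$,
\begin{equation*}
F(\chan P,\chan Q)=\min_\rho\,\max_{\|K\|\le 1}\re\tr\!\big(S^\dagger(\one\otimes K)T\,\rho\big),
\end{equation*}
where $K$ runs over the unit ball of operators between the two environments. This follows from Uhlmann's theorem: for a purification $\ket\psi$ of $\rho$ the two purified outputs are $(S\otimes\one)\ket\psi$ and $(T\otimes\one)\ket\psi$, and the fidelity is their maximal overlap over actions on the purifying (environment) factor; tracing out the reference collapses this overlap to $\re\tr(S^\dagger(\one\otimes K)T\rho)$, with the relaxation from Uhlmann's unitaries to contractions costing nothing since the functional is linear in $K$.

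Because this expression is linear in both $\rho$ and $K$ over the convex compact sets of density operators and of contractions, Sion's minimax theorem lets me exchange $\min_\rho\max_K=\max_K\min_\rho$. Applying it with $(\chan P,\chan Q)=(\chan R\chan N,\chan M)$ pulls the Uhlmann maximization \emph{outside} the worst-case minimization, next to the already-outer maximization over $\chan R$. Combining the recovery's isometry and the Uhlmann contraction into a single operator $L=(V_{\chan R}^\dagger\otimes\one_E)(\one_C\otimes K)\colon\Hil_C\otimes\Hil_F\to\Hil_B\otimes\Hil_E$ turns the left-hand side into
\begin{equation*}
\max_{\chan R}F(\chan R\chan N,\chan M)=\max_{L\in\mathcal L}\,\min_\rho\,\re\tr\!\big(V^\dagger L\,W\,\rho\big),
\end{equation*}
with $\mathcal L$ the set of contractions realizable in this factored form. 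Running the identical argument on the right-hand side --- dilating $\chan R'\colon F\to E$ and matching the environment $\Hil_B$ of $\cchan N$ --- yields the \emph{same} functional $\min_\rho\re\tr(V^\dagger L'W\rho)$, now maximized over the set $\mathcal L'$ of contractions of the dual factored form $L'=(\one_E\otimes K')(\one_C\otimes V_{\chan R'})$.

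Everything then reduces to the key lemma that these two families coincide, $\mathcal L=\mathcal L'$, and this is the main obstacle and the precise point where complementarity is essential. Structurally the factorizations are mirror images: in $\mathcal L$ one expands the $F$-leg by a \emph{contraction} into $\Hil_G\otimes\Hil_E$ and then contracts $C\otimes G$ by a \emph{coisometry} onto $B$, whereas in $\mathcal L'$ one expands the $F$-leg \emph{isometrically} into $\Hil_E\otimes\Hil_{G'}$ and then contracts $C\otimes G'$ by a general \emph{contraction} onto $B$. I expect to close the gap by dilating contractions to isometries: any contraction occurring as a Uhlmann factor can be written as a coisometry composed with an isometry on an enlarged ancilla, which lets one trade an ``isometric recovery plus contractive Uhlmann factor'' for a ``contractive Uhlmann factor plus isometric recovery,'' and conversely. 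In fact I expect both sets to exhaust \emph{all} contractions $\Hil_C\otimes\Hil_F\to\Hil_B\otimes\Hil_E$ once the ancilla dimension is unrestricted, in which case both sides collapse to the single, manifestly recovery-independent quantity $\max_{\|L\|\le1}\min_\rho\re\tr(V^\dagger L W\rho)$ --- which is also what makes the optimal fidelity computable from a dual problem posed on the environment. Carrying out this equality while tracking the enlargements of the ancillas $G,G'$ and the unitary-to-contraction relaxation is the part I expect to demand the most care.
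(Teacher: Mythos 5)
Your reduction is sound up to its final step, and up to that point it is essentially the paper's own machinery: the Uhlmann identity $F(\chan P,\chan Q)=\min_\rho\max_{\|K\|\le 1}\re\tr\big(S^\dagger(\one\otimes K)T\rho\big)$, the losslessness of relaxing unitaries to contractions, and the Sion/minimax exchange of $\min_\rho$ with $\max_K$ are all correct, and they validly yield $\max_{\chan R}F(\chan R\chan N,\chan M)=\max_{L\in\mathcal L}\min_\rho\re\tr(V^\dagger LW\rho)$ and its primed counterpart. But the proof stops exactly where the content of the theorem lies: the equality of the two feasible sets is only ``expected,'' never proven. Worse, the stronger claim you lean on --- that with unrestricted ancillas both $\mathcal L$ and $\mathcal L'$ exhaust \emph{all} contractions $\Hil_C\otimes\Hil_F\to\Hil_B\otimes\Hil_E$ --- is false, and badly so: the operator $L_0=VW^\dagger$ is a contraction (an isometry composed with a coisometry), and $\re\tr(V^\dagger L_0W\rho)=\tr(V^\dagger V\,W^\dagger W\rho)=\tr(\rho)=1$ for every $\rho$. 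So if $\mathcal L$ contained all contractions, the left-hand side of Eq.~\eqref{fidequ} would equal $1$ for \emph{every} pair $\chan N,\chan M$, i.e.\ every channel would be perfectly correctable (take $\chan N$ completely depolarizing and $\chan M=\id$ for a contradiction). The factored form of the elements of $\mathcal L$ is precisely what encodes that $\chan R$ acts only on $B$ and never touches the environment $E$; it is a genuine constraint and cannot be relaxed away, so the ``collapse to a single recovery-independent quantity'' you hoped for cannot happen.

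The weaker lemma $\mathcal L=\mathcal L'$ that your argument actually needs is true, and the mechanism you name (dilating contractions to isometries) does work; for instance, given $L=(V_{\chan R}^\dagger\otimes\one_E)(\one_C\otimes K)\in\mathcal L$ with $K:\Hil_F\to\Hil_G\otimes\Hil_E$, set $D=(\one_F-K^\dagger K)^{1/2}$, pick any isometry $J:\Hil_F\to\widetilde{\Hil}_F\otimes\Hil_E$, and check that $V_{\chan R'}\eta:=K\eta\oplus JD\eta$ is an isometry from $\Hil_F$ into $(\Hil_G\oplus\widetilde{\Hil}_F)\otimes\Hil_E$ satisfying $(P_G\otimes\one_E)V_{\chan R'}=K$, where $P_G$ is the coisometry onto the first summand; then $L=(K'\otimes\one_E)(\one_C\otimes V_{\chan R'})$ with the contraction $K'=V_{\chan R}^\dagger(\one_C\otimes P_G)$, so $L\in\mathcal L'$, and the reverse inclusion is the mirror-image computation. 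Since this dilation bookkeeping \emph{is} the substance of the theorem, leaving it as an expectation is a real gap, not a routine omission. It is also worth knowing that the paper closes the argument without any set equality: it observes that Hermitian conjugation of the circuit (a mirror symmetry exchanging the wires $B\leftrightarrow E'$ and $E\leftrightarrow B'$) shows the right-hand side equals $\max_{U'}\min_\rho\max_{U}|g_\rho(U,U')|$ --- the \emph{same} bilinear functional as the left-hand side with the roles of the two maximizations interchanged --- after which one minimax exchange of the inner $\min_\rho$ with the inner $\max$ turns both orderings into a common $\max\max\min$, with no tracking of ancilla enlargements at all.
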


\begin{proof}
The proof closely follows arguments used in~\cite{kretschmann08}. Let $V_{\chan N}$ be the isometry for which $\chan N(\rho) = \tr_E (V_{\chan N}\rho V_{\chan N}^\dagger)$ and $\cchan N(\rho) = \tr_B (V_{\chan N}\rho V_{\chan N}^\dagger)$, and $V_{\chan M}$  be the isometry yielding both $\chan M$ and $\cchan M$ in the same way. Note that for a fixed state $\ket{0}$, any channel $\chan R$ can be written as $\chan R(\rho) = \tr_{\widetilde E}(U (\rho \otimes \proj{0}) U^\dagger)$ for some unitary $U$ and appropriate ``environment'' $\widetilde E$. Using this fact and applying Uhlmann's theorem \cite{uhlmann76} which allows us to write the entanglement fidelity in terms of an overlap maximized over unitary operators $U'$, we obtain
\begin{gather}
\label{minimax1}
\max_{\chan R} F(\chan R \chan N, \chan M) 
=\max_{U} \min_{\rho} \max_{U'} |g_\rho(U,U')|,
\end{gather}
where $g_\rho$ can be expressed in terms of a circuit:
\begin{equation*}
g_\rho(U,U') =
\xy0;/r.175pc/:
(-10,12)*{}; (58,12)*{} **\dir{-};    
(-15,7.5)*{    
  \xy
  (-3,6)*{\psi_\rho};
  (0,12)*{}; (0,0)*{} **\crv{(-10,12) & (-10,0)};
  (0,12)*{}; (0,0)*{} **\dir{-};
  (0,12)*{}; (0,0)*{}
  \endxy
};
(-10,4)*{}; (-6,4)*{} **\dir{-};   
(0,-0.5)*{      
  \xy
  (6,6)*{V_{\mathcal N}};
  (0,0)*{}; (12,0)*{} **\dir{-};
  (12,0)*{}; (12,12)*{} **\dir{-};
  (12,12)*{}; (0,12)*{} **\dir{-};
  (0,12)*{}; (0,0)*{} **\dir{-};
  \endxy
};
(6,4)*{}; (14,4)*{} **\dir{-};  (10,6)*{\scriptstyle B}; 
(18,1.5)*{
  \xy
  (4,4)*{U};  
  (0,0)*{}; (8,0)*{} **\dir{-};
  (8,0)*{}; (8,8)*{} **\dir{-};
  (8,8)*{}; (0,8)*{} **\dir{-};
  (0,8)*{}; (0,0)*{} **\dir{-};
  \endxy
};
(11,0)*{}; (14,0)*{} **\dir{-};  
(9,-0.5)*{
  \xy    
  (-1.4,2)*{\scriptstyle 0};
  (0,4)*{}; (0,0)*{} **\crv{(-4,4) & (-4,0)};
  (0,4)*{}; (0,0)*{} **\dir{-};
  (0,4)*{}; (0,0)*{}
  \endxy
};
(6,-4)*{}; (25,-4)*{} **\dir{-}; (16,-7)*{\scriptstyle E};  
(22,0)*{}; (25,0)*{} **\dir{-}; 
(22,4)*{}; (42,4)*{} **\dir{-}; (32,6)*{\scriptstyle B'}; 
(30,-2.5)*{
  \xy    
  (5.2,4)*{{U'}^\dagger};
  (0,0)*{}; (10,0)*{} **\dir{-};
  (10,0)*{}; (10,8)*{} **\dir{-};
  (10,8)*{}; (0,8)*{} **\dir{-};
  (0,8)*{}; (0,0)*{} **\dir{-};
  \endxy
};
(35,0)*{}; (37,0)*{} **\dir{-};  
(35,-4)*{}; (42,-4)*{} **\dir{-}; (38,-7)*{\scriptstyle E'}; 
(39,-0.5)*{
  \xy    
  (1.4,2)*{\scriptstyle 0};
  (0,4)*{}; (0,0)*{} **\crv{(4,4) & (4,0)};
  (0,4)*{}; (0,0)*{} **\dir{-};
  (0,4)*{}; (0,0)*{}
  \endxy
};
(48,-0.5)*{
  \xy    
  (6,6)*{V_{\mathcal M}^\dagger};
  (0,0)*{}; (12,0)*{} **\dir{-};
  (12,0)*{}; (12,12)*{} **\dir{-};
  (12,12)*{}; (0,12)*{} **\dir{-};
  (0,12)*{}; (0,0)*{} **\dir{-};
  \endxy
};
(54,4)*{}; (58,4)*{} **\dir{-};  
(63,7.5)*{
  \xy     
  (3.5,6)*{\psi_\rho};
  (0,12)*{}; (0,0)*{} **\crv{(10,12) & (10,0)};
  (0,12)*{}; (0,0)*{} **\dir{-};
  (0,12)*{}; (0,0)*{}
  \endxy
};
\endxy,
\end{equation*}
where the left half circles represent input states, while the right half circles are states which are scalar multiplied with the corresponding outputs. Hence the picture represents a complex number. The wires labeled $B$ and $B'$ represent the target systems for $\chan N$ and $\chan M$ respectively, and $E$ and $E'$ are the respective ``environments''. The state $\ket{0}$ in the picture is arbitrary, and $\ket{\psi_\rho}$ can be any purification of $\rho$. If we reflect the picture with respect to a vertical axis through the middle, Hermitian conjugating each operator [this amounts to a complex conjugation of $g_{\rho}(U,U')$], and exchange the wire labels $E'$ and $B$, and $E$ and $B'$, we see that we also have $\max_{\chan R'} F( \cchan N, \chan R' \cchan M) =\max_{U'} \min_{\rho} \max_{U} |g_\rho(U,U')|$, where now $U'$ is the unitary defining $\chan R'$ while $U$ comes from Uhlmann's expression for the fidelity. Hence we just have to show that we can exchange the maximizations over $U$ and $U'$ in Eq.~\eqref{minimax1}. First, using the strong concavity of the fidelity, it can be shown that the $\max$ over $U$ in Eq.~\eqref{minimax1} can as well be taken over the convex set of operators $A$ with operator norm $\|A\|\leq 1$. Next, note that \( |g_\rho(A,U')| = |\tr(X_\rho U')| \) for some operator $X_\rho$. We know that $\max_{U'} |\tr(X_\rho U')| = \tr(|X_\rho|)$. Since the optimal value of $\tr(X_\rho U')$ is real, we only need to optimize $\re \tr(X_\rho U')$ which is linear in $\rho$ and $U'$ over the real numbers. In addition, the max over $U'$ can also be taken over operators $A'$ in the unit ball since then $|\tr(X_\rho A')| \le \tr(|X_\rho|)$. We can now apply Shiffman's minimax theorem~\cite{grossinho01} which says that we can exchange the rightmost min and max provided that the function is convex-concave in the two arguments (in this case it is bilinear), and that the variables are optimized over convex sets. Hence, we obtain $\max_{\chan R} F(\chan R \chan N, \chan M) =\max_{A'}\max_{A} \min_{\rho} \re g_\rho(A,A')= \max_{\chan R'} F( \cchan N, \chan R' \cchan M)$, where $\|A'\|, \|A\|\le 1$.
\end{proof}
Note that Eq.~\eqref{fidequ} can be seen as a necessary and sufficient condition for approximate correctability: for a given $\delta\in[0,1]$, there exists a channel $\chan R$ such that $F(\chan R \chan N, \chan M) = 1-\delta$, iff there exists a channel $\chan R'$ such that $F(\cchan N, \chan R' \cchan M)=1-\delta$. We will see that for a large class of problems of interest the existence of $\chan R'$ is much easier to establish than that of $\chan R$.

\prlsection{Knill-Laflamme conditions} Consider the case $\chan M = \id$. An example of a channel complementary to the identity is the trace: $\cchan M = \tr$, whose target is one-dimensional. A channel whose source is one-dimensional outputs a single state. Hence, $\chan R'\cchan M(\rho)=\rho_0$, $\forall \rho$, where $\rho_0$ is a fixed state. Theorem~\ref{main} thus says that $\max_{\chan R} F(\chan R\chan N, \id) = 1$ iff $\cchan N(\rho) = \rho_0 \tr(\rho)$, $\forall \rho$. Explicitly, suppose that the channel $\chan N$ consists of an encoding specified by an isometry $V$ followed by noise with channel elements $E_i$.
In terms of matrix components, the condition that $\cchan N$ be a constant channel with output $\rho_0$ reads
$V^\dagger E_i^\dagger E_j V = \lambda_{ij} \one$, where $\lambda_{ij} = \bra{i} \rho_0 \ket{j}$. One obtains the most familiar form of the KL conditions by expressing these equations using the projector $P = V V^\dagger$ on the code:
\(
P E_i^\dagger E_j P = \lambda_{ij} P.
\)

More generally, if $\chan M = \chan P_{\mathcal A}$ projects on an algebra $\mathcal A$, then we obtain the general correctability conditions for an algebra~\cite{beny07x1}, namely $[A,V^\dagger E_i^\dagger E_j V] = 0$ for all $A \in \mathcal A$. This can be shown by noting that $\cchan P_{\mathcal A} =  \chan P_{\mathcal A'}$, where $\mathcal A'$ is the commutant of $\mathcal A$, i.e. the set of operators commuting with all $A \in \mathcal A$ (see~\cite{beny09x1} for more details.) In particular, when the algebra $\mathcal A$ consists of all operators acting on a subsystem, this yields the conditions for the correctability of a subsystem code~\cite{kribs06}.

Let us show explicitly how Theorem~\ref{main} can be understood as a perturbation of the KL conditions in the case $\chan M = \id$. Since later in our analysis we will use triangle inequalities, it is convenient to measure the error of imperfect recovery by a fidelity-based distance function. We will consider the Bures distance \cite{bures69} based on the entanglement fidelity,
\(
d_\rho(\chan N, \chan M) = \sqrt{1 - {F_\rho(\chan N, \chan M)}}.
\)
Note that
\begin{equation}
d(\chan N, \chan M) := \max_\rho d_\rho(\chan N, \chan M) = \sqrt{1 - {F(\chan N, \chan M)}}
\end{equation}
satisfies the triangle inequality:
\(
d(\chan N, \chan M) = \max_\rho d_\rho(\chan N, \chan M) \le \max_\rho [ d_\rho(\chan N, \chan R) + d_\rho(\chan R, \chan M) ] \le d(\chan N, \chan R) + d(\chan R, \chan M).
\)

\begin{definition}
We will say that a code characterized by the encoding map $\chan E$ is $\varepsilon$-correctable under the noise channel $\chan N$, if there exists a recovery channel $\chan R$ such that $d(\chan R \chan N \chan E, \id)\leq \varepsilon$.
\end{definition}
\begin{corollary}\label{perturbation}
A code defined by the projector $P$ is $\varepsilon$-correctable under a noise channel $\chan N$, if and only if
\begin{equation}
PE_i^{\dagger}E_jP=\lambda_{ij}P+PB_{ij}P, 
\end{equation}
where $\lambda_{ij}$ are the components of a density operator, and
\(
d(\Lambda+\chan B, \Lambda)\leq \varepsilon 
\)
where $\Lambda(\rho)=\sum_{ij}\lambda_{ij}\tr(\rho)|i\rangle\langle j|$ and $(\Lambda+\chan B)(\rho)=\Lambda(\rho)+\sum_{ij}\tr(\rho B_{ij})|i\rangle\langle j|$.
\end{corollary}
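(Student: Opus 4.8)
The plan is to specialize Theorem~\ref{main} to $\chan M=\id$ and recognize the resulting dual problem as an optimization over constant channels. First I would apply Theorem~\ref{main} to the encoded noise channel $\chan N\chan E$ (whose Kraus operators are the $E_iV$), taking $\chan M=\id$. As already noted in the Knill--Laflamme discussion, a channel complementary to $\id$ is the trace $\tr$, whose target is one-dimensional, so $\chan R'\tr$ is a constant channel $\Lambda(\rho)=\rho_0\,\tr(\rho)$ with fixed output $\rho_0$; as $\chan R'$ ranges over all channels, $\rho_0$ ranges over all density operators. Theorem~\ref{main} then reads
\[
\max_{\chan R} F(\chan R\chan N\chan E,\id)=\max_{\rho_0} F(\cchan N,\Lambda),
\]
where $\cchan N$ is the complementary channel of $\chan N\chan E$ and $\Lambda=\Lambda_{\rho_0}$.

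Second, I would pass to the Bures distance $d=\sqrt{1-F}$, which is strictly decreasing in $F$, so the outer maxima become minima without disturbing the inner structure:
\[
\min_{\chan R} d(\chan R\chan N\chan E,\id)=\min_{\rho_0} d(\cchan N,\Lambda).
\]
By the Definition preceding the corollary, the code is $\varepsilon$-correctable precisely when the left-hand side is at most $\varepsilon$; hence it is $\varepsilon$-correctable if and only if there exists a density operator $\rho_0$ with $d(\cchan N,\Lambda)\le\varepsilon$.

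Third, I would identify $\cchan N$ with the channel $\Lambda+\chan B$ of the statement. Writing $\cchan N$ explicitly via Eq.~\eqref{nhatdevelop} with Kraus operators $E_iV$, and substituting the decomposition $PE_i^\dagger E_jP=\lambda_{ij}P+PB_{ij}P$ (with $\lambda_{ij}=\bra{i}\rho_0\ket{j}$) together with $P=VV^\dagger$ and $V^\dagger V=\one$, yields the operator identity $\cchan N=\Lambda+\chan B$. The crucial observation is that this decomposition is a \emph{definition} of $\chan B$ once $\rho_0$ is chosen, not a constraint: for every density operator $\rho_0$ one has $\Lambda+\chan B=\cchan N$ identically, whence $d(\Lambda+\chan B,\Lambda)=d(\cchan N,\Lambda)$. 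Therefore ``there exists a decomposition with $d(\Lambda+\chan B,\Lambda)\le\varepsilon$'' is literally ``there exists $\rho_0$ with $d(\cchan N,\Lambda)\le\varepsilon$,'' and combining with the previous step gives the corollary.

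The conceptual work is done entirely by Theorem~\ref{main}; the corollary only adds the bookkeeping that on the dual side the free recovery $\chan R'$ degenerates to a single output state $\rho_0$. The point needing the most care is to keep $\lambda$ (equivalently $\rho_0$) free: since $\chan B$ is manufactured from $\lambda$ and is in general not completely positive, one must read the decomposition not as a hypothesis but as a reparametrization of the constant channel $\Lambda$, and verify that $\rho_0$ genuinely sweeps out all density operators, so that no admissible constant channel is omitted from the minimization. The remaining ingredients---the trace being complementary to $\id$, and the monotone change of variables $F\mapsto\sqrt{1-F}$ swapping max for min---are routine.
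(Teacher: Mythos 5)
Your proposal is correct and follows essentially the same route as the paper's own (much terser) proof: specialize Theorem~\ref{main} to $\chan M=\id$ so that the dual recovery $\chan R'\circ\tr$ degenerates to a constant channel $\Lambda$ with output $\rho_0$, pass from fidelity to the Bures distance, and use Eq.~\eqref{nhatdevelop} to identify $\cchan N\chan E$ with $\Lambda+\chan B$, the $B_{ij}$ being \emph{defined} by the decomposition rather than constrained by it. Your explicit remark that the decomposition is a reparametrization (so that $d(\Lambda+\chan B,\Lambda)=d(\cchan N\chan E,\Lambda)$ for every choice of $\rho_0$) is exactly the content of the paper's closing sentence, spelled out more carefully.
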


\begin{proof}
Let us denote the encoding channel by $\chan E$. It is clear from Theorem~\ref{main} that the code is $\varepsilon$-correctable if and only if there exists a state $\rho_0$ such that $d(\cchan N \chan E, \Lambda) \leq \varepsilon$, where $\Lambda$ is defined as in the statement of the corollary with $\lambda_{ij} = \bra{i}\rho_0 \ket{j}$. Also, from Eq.~\eqref{nhatdevelop} we see that indeed $\cchan N \chan E = \Lambda + \chan B$ since the operators $V^\dagger B_{ij} V$ are defined by $PB_{ij}P = PE_i^{\dagger}E_jP - \lambda_{ij}P$.
\end{proof}

It is not {\it a priori} clear how useful this condition can be since it does not specify how to find an optimal set of coefficients $\lambda_{ij}$. We will now show, in a more general setting, that we can find a whole set of explicit guesses for $\lambda_{ij}$ which are guaranteed to yield a value of $\varepsilon$ which is less than twice the optimal one. Explicitly, this is the case whenever $\lambda_{ij} = \tr(\sigma E_i^\dagger E_j)$ for some state $\sigma$.

\prlsection{Near-optimal correction} We saw that in the exact case (fidelity one), Theorem~\ref{main} yields the necessary and sufficient conditions for all quantum error-correction schemes when $\chan M$ projects on an algebra. Here we want to show that it also yields useful conditions for the approximate version of these schemes. The problem is that in general it may not be easier to compute $\max_{\chan R'} F(\cchan N, \chan R' \cchan M)$ than $\max_{\chan R} F(\chan R \chan N, \chan M)$. However, we will show that when $\cchan M$ is a projection, i.e. it satisfies $\cchan M^2 = \cchan M$ (which is the case for error correction), we can guess a whole class of channels $\widetilde{\chan R}'$ for which $F(\cchan N, \widetilde{\chan R}' \cchan M)$ yields a good approximation to the optimal worst-case fidelity $\max_{\chan R} F(\chan R \chan N, \chan M)$. Moreover, we can build the corresponding near-optimal recovery channels $\widetilde{\chan R}$.

\begin{corollary}
\label{useful}
Suppose that $\cchan M^2 = \cchan M$. Then
\begin{equation}
\frac{1}{2}d(\cchan N, \cchan N \cchan M) \le \min_{\chan R} d(\chan R \chan N, \chan M) \le d(\cchan N, \cchan N \cchan M).
\end{equation}
\end{corollary}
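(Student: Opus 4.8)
The plan is to push everything through to the complementary picture supplied by Theorem~\ref{main} and to work with the Bures distance $d=\sqrt{1-F}$. Since $d$ is a strictly decreasing function of the worst-case fidelity, Eq.~\eqref{fidequ} is equivalent to
\begin{equation*}
\min_{\chan R} d(\chan R \chan N, \chan M) = \min_{\chan R'} d(\cchan N, \chan R' \cchan M),
\end{equation*}
so it suffices to sandwich the right-hand side between $\frac{1}{2} d(\cchan N, \cchan N \cchan M)$ and $d(\cchan N, \cchan N \cchan M)$.

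For the upper bound I would just exhibit one admissible $\chan R'$. Because $\cchan M^2 = \cchan M$, the source and target of $\cchan M$ coincide, and this common space is the source of $\cchan N$ (for error correction, the code space); hence $\chan R' = \cchan N$ is a legitimate channel, and for this choice $\chan R' \cchan M = \cchan N \cchan M$. Minimizing over $\chan R'$ can only lower the distance, which gives the right inequality immediately.

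The lower bound is the substantive direction. Fix an arbitrary $\chan R'$ and set $\chan S := \chan R' \cchan M$; idempotency yields the key identity $\chan S \cchan M = \chan S$. Using $\cchan N \cchan M$ as an intermediate point and the triangle inequality for $d$,
\begin{equation*}
d(\cchan N, \cchan N \cchan M) \le d(\cchan N, \chan S) + d(\chan S, \cchan N \cchan M).
\end{equation*}
The first term is precisely $d(\cchan N, \chan R' \cchan M)$. For the second I would rewrite $\chan S = \chan S \cchan M$ and contract: $d(\chan S, \cchan N \cchan M) = d(\chan S \cchan M, \cchan N \cchan M) \le d(\chan S, \cchan N) = d(\cchan N, \chan R' \cchan M)$, where the last equality uses symmetry of $d$. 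Adding the two estimates gives $d(\cchan N, \cchan N \cchan M) \le 2\,d(\cchan N, \chan R' \cchan M)$ for \emph{every} $\chan R'$; taking the minimum over $\chan R'$ produces the left inequality.

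The one ingredient that requires genuine proof, and the step I expect to be the main obstacle, is the contractivity invoked above, namely $d(\chan A \chan C, \chan B \chan C) \le d(\chan A, \chan B)$ for any channel $\chan C$ (here applied with $\chan C = \cchan M$). I would deduce it from data processing for Uhlmann's fidelity together with the fact that $F_\rho$ is independent of the chosen purification: the first marginal of $(\chan C \otimes \id)\proj{\psi_\rho}$ equals $\chan C(\rho)$, so purifying this state, applying $\chan A$ or $\chan B$ on the system factor alone, and tracing out the auxiliary purifying space can only raise the fidelity, giving $F_\rho(\chan A \chan C, \chan B \chan C) \ge F_{\chan C(\rho)}(\chan A, \chan B) \ge F(\chan A, \chan B)$; minimizing over $\rho$ yields the claim. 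It is the interplay of this monotonicity with the idempotency $\cchan M^2 = \cchan M$, which lets one copy of $\cchan M$ be absorbed into $\chan S$, that produces the clean factor of two.
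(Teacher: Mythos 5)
Your proof is correct and takes essentially the same route as the paper's: the upper bound by choosing the nonoptimal $\chan R' = \cchan N$, and the lower bound by the same triangle inequality through the intermediate point $\cchan N \cchan M$, absorbing one factor of $\cchan M$ via idempotency and then contracting it away by monotonicity of $d$ under pre-composition with a channel. The only differences are cosmetic: you quantify over an arbitrary $\chan R'$ rather than an optimal $\chan R'_0$, and you spell out the monotonicity step $d(\chan A \chan C, \chan B \chan C) \le d(\chan A, \chan B)$ that the paper invokes implicitly.
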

\proof
First note that Theorem~\ref{main} expressed in terms of $d$ reads $\varepsilon_0 := \min_{\chan R} d(\chan R \chan N, \chan M) = \min_{\chan R'} d(\cchan N, \chan R' \cchan M)$. The rightmost inequality follows from picking the nonoptimal $\chan R' = \cchan N$. For the leftmost inequality, suppose that $\chan R'_0$ minimizes $d(\cchan N, \chan R' \cchan M)$. Then using the triangle inequality,
\(
d(\cchan N, \cchan N \cchan M) \le d(\cchan N, \chan R'_0 \cchan M) + d(\chan R'_0 \cchan M, \cchan N \cchan M).
\)
We know that the first term is equal to $\varepsilon_0$ since $\chan R'_0$ is optimal. For the second term, note that
\(
 d(\chan R'_0 \cchan M, \cchan N \cchan M) = d(\chan R'_0 \cchan M^2, \cchan N \cchan M) \le d(\chan R'_0\cchan M, \cchan N)  = \varepsilon_0.
\)
Hence, $d(\cchan N, \cchan N \cchan M) \le 2\varepsilon_0$.
\qed

Note that computing $d(\cchan N, \cchan N \cchan M)$ requires a convex maximization over inputs only \cite{gilchrist05}, which is a significant simplification over the minimax $\min_{\chan R} d(\chan R\chan N,\chan M)$.

\prlsection{Near-optimal recovery channel}
Let us show how we can construct a recovery channel $\widetilde{\chan R}$ which performs as well as guaranteed by our bounds, i.e.
\begin{equation}
\label{goodchan}
d(\widetilde{\chan R} \chan N, \chan M) \le d(\cchan N, \cchan N\cchan M).
\end{equation}
We first need to find a saddle point $(\rho_0,A_0)$ of $\re g_{\rho}(A,U')$, where $U'$ yields $\cchan N$ through $\cchan N(\rho) = \tr_2[U'(\rho \otimes \proj{0}) (U')^\dagger]$. This implies that $F(\cchan N, \cchan N\cchan M) = \re g_{\rho_0}(A_0,U')$.
For instance, in the case $\chan M=\id$, one may first find a $\rho_0$ that minimizes $\tr(|X_{\rho_0}|)$, which is a convex optimization problem~\cite{gilchrist05}. If $\rho_0$ is full rank and unique, then $A_0$ can be chosen to be any unitary obtained from the polar decomposition of $X_{\rho_0}$.
Generally, the saddle-point operator $A_0$ yields a channel
\(
\chan S(\rho) := \tr_2(A_0(\rho \otimes \proj{0}) A_0^\dagger)
\)
which may be trace-decreasing but can always be completed to a trace-preserving channel $\widetilde{\chan R}(\rho) = \chan S(\rho) + \tr(\rho - \chan S(\rho)) \tau$ for any fixed state $\tau$.
$\widetilde{\chan R}$ then satisfies Eq.~\eqref{goodchan}. Indeed,
$F(\cchan N, \cchan N\cchan M) = \min_\rho \re g_{\rho}(A_0,U')
\le \max_{A',\|A'\|\le 1} \min_\rho \re g_{\rho}(A_0,A')
= \min_\rho \max_{U'} \re g_{\rho}(A_0,U')
\le F(\widetilde{\chan R} \chan N, \chan M).$

\begin{figure}
\includegraphics[width=0.35\textwidth]{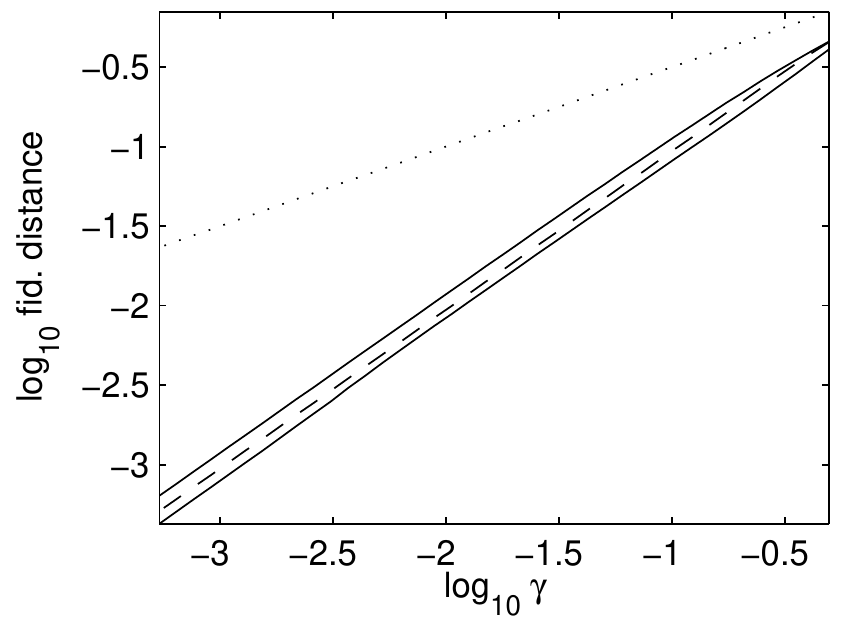}
\caption{Example based on the encoding $\chan E$ and channel $\chan N_{\gamma}$ of Ref.~\cite{leung97}.
Dotted line: fidelity distance without correction.
Upper solid line: distance achieved by the recovery channel described in Ref.~\cite{leung97}.
Dashed line: near-optimal estimate $\delta(\chan N_\gamma \chan E)$.
Lower solid line: distance achieved by our near-optimal recovery channel.}
\label{zefig}
\end{figure}

\prlsection{Example}
In the standard case $\chan M = \id$, the estimate
\begin{equation}
\delta(\chan N) = d(\cchan N,\cchan N \cchan M)
\end{equation}
from Corollary~\ref{useful} is given explicitly in terms of the fidelity by
\begin{equation}
F(\cchan N,\cchan N \cchan M) = \min_\rho \tr\sqrt{\sum\nolimits_{ij} E_i \rho^2 E_j^\dagger \, \tr(E_j \sigma E_i^\dagger)},
\end{equation}
where $\sigma$ is an arbitrary state resulting from the freedom in choosing the complementary channel $\cchan M$.
As an example, we applied our results to the code proposed in Ref.~\cite{leung97}, for which we numerically computed $\delta(\chan N \chan E)$, where $\chan E$ is the encoding, and $\chan N_\gamma$ is the noise model with strength $\gamma$ considered there (FIG.~\ref{zefig}).
We see that our estimate predicts that this code is good in terms of the worst-case entanglement fidelity, in the sense that it yields a fidelity distance of order $\gamma$ instead of the uncorrected order $\sqrt \gamma$. The necessity of our criterion shows in particular that one cannot obtain a better asymptotic behavior with this code.

\prlsection{Fixed-state fidelity}
We finally note that our method can also be applied to the problem of error correction on a fixed input state (a scenario studied, e.g., in Ref.~\cite{schumacher01,barnum00}). Indeed, for any state $\rho$ we also have $\max_{\chan R} F_{\rho}(\chan R \chan N, \chan M) = \max_{\chan R'} F_{\rho}(\cchan N, \chan R' \cchan M)$. The argument is simpler as no minimization over $\rho$ is involved.
In the case $\chan M = \id$, and using a reasoning very similar to the one we used for the worst-case fidelity, we can obtain the estimate
\begin{equation}
\frac{1}{2}d_{\rho}(\cchan N, \chan S) \le \min_{\chan R} d_{\rho}(\chan R \chan N, \id) \le d_{\rho}(\cchan N,  \chan S)
\end{equation}
where $\chan S(\sigma) := \cchan N(\rho) \tr(\sigma)$.
In addition, the corresponding near-optimal recovery channel can be built using the same method as for the worst-case fidelity with the simplification that no search of a saddle point is involved.

\prlsection{Conclusion}
In summary, we have generalized the Knill-Laflamme conditions to the case of approximate correctability, including standard codes, subsystem codes, and hybrid quantum-classical codes.
We obtained easy-to-calculate estimates of the optimal recovery error and proposed a class of near-optimal recovery channels for the worst-case entanglement fidelity that work within twice of the optimal error. These results provide a framework for studying error correction under general noise models and allow for a significant simplification to the task of finding optimal error-correction procedures, thus offering a promising tool to efficiently address the problem of decoherence control in realistic scenarios.

\prlsection{Acknowledgements} OO was supported by Spanish MICINN (Consolider-Ingenio QOIT). Part of this work was done during the QI workshop at the Benasque Center for Science, Benasque, Spain, and during the Fields Institute Thematic Program on Mathematics in Quantum Information. The Centre for Quantum Technologies is funded by the Singapore Ministry of Education and the National Research Foundation as part of the Research Centres of Excellence programme.

\bibliography{approx_qec2}

\end{document}